\DeclareMathOperator{\tr}{tr}		
\DeclareMathOperator{\ad}{ad}		
\DeclareMathOperator{\id}{id}		
\newcommand{\bndr}[1]{\partial{#1}}	
\newcommand{\isom}{\ensuremath{\cong}} 
\newcommand{\ext}{\ensuremath{d}} 
\newcommand{\cob}{\ensuremath{\delta}} 
\newcommand{\inv}[1]{\ensuremath{{#1}^{-1}}} 
\newcommand{\gext}[1]{\ensuremath{\widehat{#1}}} 
\newcommand{\extseq}[3]{\ensuremath{#1 \hookrightarrow \gext{#2} \to #3}} 
\newcommand{\lie}[1]{\ensuremath{\mathfrak{#1}}} 
\newcommand{\aut}[1]{\ensuremath{\mathrm{aut}(#1)}} 
\newcommand{\A}{\mathcal{A}} 
\newcommand{\smooth}[1]{\ensuremath{\textrm{Map}(#1)}}
\newcommand{\ballgroup}{\ensuremath{B^3_{\flat}G}} 
\newcommand{\basedball}{\ensuremath{B^3_{\flat,c}G}} 
\newcommand{\loopgroup}{\ensuremath{\Omega^3G}} 
\newcommand{\loopgroupmod}{\ensuremath{\mathcal{S}}} 
\newcommand{\spheregroup}{\ensuremath{S^3G}} 
\newcommand{\potgroup}{\ensuremath{\smooth{\ballgroup,S^1}}} 
\newtheorem{definition}{Definition}
\newtheorem{theorem}{Theorem}
\begin{document}

\title[A crossed module representation of  a 3-loop group]{A crossed module representation of a $2$-group constructed from the  $3$-loop group $\loopgroup$}

\author{Jouko Mickelsson}

\address[Jouko Mickelsson]{Department of Mathematics and Statistics, University of Helsinki}


 \email{jouko.mickelsson@gmail.com} 



\maketitle

\begin{abstract}
The quantization of chiral fermions on a 3-manifold in an external gauge potential is known to lead to an abelian extension of the gauge group. In this article we concentrate on the case of $\Omega^3 G$ of based smooth maps
on a 3-sphere taking values in a compact Lie group $G.$ There is a crossed module constructed from an abelian extension $\widehat{\Omega^3 G}$ of this group and a group of automorphims acting on it as eplained in a 
recent article by Mickelsson and Niemim\"aki. We shall
construct a representation of this crossed module in terms of a repesentation of $\widehat{\Omega^3 G}$ on a space of functions of gauge potentials with values in a fermionic Fock space and a representation of the automorphism
group of $\widehat{\Omega^3 G}$ as outer automorphisms of the canonical anticommutation relations algebra in the Fock space.

\end{abstract}

\section{Introduction}
\label{intro}

A crossed module of a pair of groups can be viewed as a higher categorical level of an ordinary group, called a 2-group. Motivated by the desire to understand the geometry of string groups as 2-groups the case of the
loop group $LG$ of a compact Lie group $G$ and the group of based paths $PG$ was studied in  \cite{bscs2007}. 

 In a recent work Kristel, Ludewig and Waldorf \cite{KLW1}, \cite{KLW2}  constructed  a representation of the 2-group (or equivalently, a crossed module) defined using the smooth loop group $L
G$  and the corresponding path
group $PG$ of smooth paths starting from the identity. The representaion was defined in a representation space (fermionic Fock space) of the canonical anticommutation relations algebra (CAR) based on a Hilbert space $H.$
A central extension $\widehat{LG}$ of the loop group acts unitarily in the Fock space whereas the elements of $PG$ act as outer automorphism of both  $\widehat{LG}$ and the CAR algebra, with the compatibility
conditions coming  from the definition of a crossed module. 

The aim of the present work is to extend the construction in \cite{KLW1} to the case of the group $\Omega^3 G$ of based smooth maps from  the 3-sphere to $G$ and an  analog of the path group.  The motivation
for the choice of the 3-sphere comes from 1) the fact that the nontrivial gauge group extensions exist in odd dimensions: This is related to the appearance of hamiltonian anomalies in odd dimensions, coming from characteristic
classes as functions of gauge potentials, using the descent equations starting from a Chern class in dimension $2n+2$ down to a Chern-Simons form in $2n+1$ dimensions and further to the gauge anomaly of the Dirac determinant
for a vector bundle over an $2n$ dimenional manifold, and finally to the commutator anomaly in $2n-1$ dimensions.  2) A physical motivation for choosing $S^3$  is that it can be considered as the compactification 
of the the space $\mathbb{R}^3$  in quantum field theory models. However, there should not be any real problems choosing another compact oriented 3-manifold. The generalization to higher odd dimensional
manifolds is also possible but the formulas for the gauge group extensions and their repsentations get more complicated.

 The first problem
comes with the fact that the group $\Omega^3 G$ is not (projectivly) represented in a Fock space. However, one can circumvent this obstruction by noting that the group is actually represented in a space of
functions of gauge connections (vector potentials) on $S^3$ with values in a Fock space. The projective phases are functions of the vector potential meaning that we do not have just a central extension but an
abelian extension $\widehat{\Omega^3G}$ by the group of circle valued functions of the vector potential.  The actual construction is based on the observation that one can fix a family $T_A$ of unitary operators in the Hilbert 
$H$ of square integrable spinor fields on $S^3$ as a function
of the external vector potential $A$  such that the operators $T_{A^g}^{-1} g T_A$  satisfy the condition  in \cite{ShSt} to be represented as operators in the Fock space; here $g$ denotes a multiplication operator
in $H$ corresponding to an element $g\in \Omega^3G$  and $A\mapsto  A^g$ is the gauge action to a vector potential $A,$ \cite{Mi94} and in a more general setting \cite{LaMi}. 

The analog of the path group is the group $\ballgroup$  consisting of smooth maps $g: B^3 \to G$ with appropriate boundary conditions on the boundary $S^2$ of the 3-ball $B^3.$ This group will also play 
another role in the discussion: The elements $g$ correspond to gauge potentials $A= g^{-1} dg$ on the 3-sphere $S^3$; because of the boundary conditions on $g$ we can identify the boundary as one
point and so the domain of $A$ is actually $S^3$ with the boundary of $B^3$ contracted to a point.

For the crossed module we need the group $\ballgroup$ (defined in Section 2) which corresponds to the path group $PG$ and contains as a subgroup the group $\Omega^3 G.$  A complication compared with the
loop group case is that $\ballgroup$ does not act as automorphims on the extension $\widehat{\Omega^3G}$ but we have to extend the former to a group $\gext{\ballgroup}$ by an abelian normal subgroup.
So  we  are looking for a representation of the crossed module $(\gext{\ballgroup}, \widehat{\Omega^3G}).$ Similarly as in the loop group case the group $\gext{\ballgroup}$ acts as outer automorphisms
of the CAR algebra. In order that we can call this a representation of the crossed module we need to show that the outer automorphism group is compatible with the action of the inner automorphisms 
coming from the Fock space representation of $\widehat{\Omega^3G}.$

All the groups of functions taking values in a Lie group are considered  as infinite-dimensional Fr\'{e}chet Lie groups, see \cite{Neeb1} \cite{KM}. This is because we are using the results in \cite{MN} on the construction of the crossed module
(recalled as  Theorem 1 in the present paper)  which in turn use the work by Neeb on infinite dimensional Lie groups, \cite{Neeb1}.

One of the strengths of the original  loop group construction is that it allows the action of the circle group. In three dimensions one would like to have the corresponding action of $SO(4)$, but there seems to be no straight-forward way to incorporate this into the picture presented here. The extension of the $3$-loop group necessitates a fixed point in $B^3$ -- namely, the contracted boundary $S^2$ -- and this choice cannot be equivariant under the symmetry action but only under
the base point preserving rotation group $SO(3)$ on $S^3.$

In Sections 2 and 3 we recall the construction of the crossed module  $(\gext{\ballgroup}, \widehat{\Omega^3G})$ from \cite{MN} and finally in Section 4 we construct a CAR algebra representation of this crossed module.

\paragraph{Notation and conventions}

Unless otherwise stated, $G$ is a simply-connected Lie group.  The group identity element is denoted by $e$ throughout, and we write $PG$ for the group of based paths in $G$. The Lie algebra of $G$ is denoted by $\lie{g}$. We identify the $1$-sphere $S^1$ with the circle group.

In places there is an implicit assumption to consider the connected component of a given group in case the connectedness for the whole group is not available. The reason is that the construction of extensions of the
groups of gauge transformations leads to extra complications in the non connected case which we want to avoid; as an example, see Section 2.2 in \cite{MN}. 

\section{$3$-loop group and its abelian extension}
\label{sec:2}
In this Section we recollect some of the results in \cite{MN}.
Let $\spheregroup$ be the group of smooth maps from the $3$-sphere to a Lie group $G$. Since every $n$-sphere can be given as the quotient $B^n/S^{n-1}$, where the boundary of the $n$-ball is contracted to a point (call it the north pole of the sphere $S^n$), we define by analogy the flattened group 
\[
	\ballgroup = \{ f \in \smooth{B^3,G} : \partial_r^{(n)} f = 0 \textrm{ for $n=1,2,3, \dots$  on the boundary } S^2 \} ,
\]
where the group multiplication is as usual the point-wise multiplication in the domain and $\partial_r$ is the partial  derivative in the radial direction at the boundary.
 We then have the \emph{$3$-loop group} defined as
\[
	\loopgroup = \{ f \in \ballgroup : f \textrm{ extends to } S^3 \textrm{ and } f(S^2) = e \} .
\]
The extension above is defined by thinking of $S^3$ as the closed unit ball with the boundary contracted to one point.
Note that while the group of rotations $SO(4)$ acts on the sphere group $S^3G$, it does not act on $\loopgroup$ since the boundary of $B^3$ is the contracted fixed point and the action cannot be extended to $S^3$.

The standard Fr{\'e}chet topology on groups and Lie algebras of smooth maps on a compact manifold is defined by the infinite set of seminorms coming from the supremums of derivatives of all orders. In this topology
the subgroups below are closed and the quotients are again in the same category of groups. For more information on infinite-dimensional locally convex Lie groups, which can be applied to the groups at hand, see
\cite{KM}, \cite{Neeb2}

It is easy to see that $\ballgroup$ retains the Lie group structure of $\smooth{B^3,G}$, and likewise for $\loopgroup$. In the case of one-dimensional loop group $\Omega G$ and the free loop group $LG$, there is the split exact sequence
\[
	\Omega G \hookrightarrow LG \to G .
\]
For the $3$-loop group this relationship is retained as follows. If we denote by $\basedball$ the subgroup of $\ballgroup$ of maps that are constant on the boundary $S^2$, the group $\basedball$ is in fact a principal $\loopgroup$-bundle over $G$:
\[
	\loopgroup \hookrightarrow \basedball \xrightarrow{\phi} G ,
\]
where $\inv{\phi}(e) = \loopgroup$ is the canonical fibre. 

Furthermore, we note that $\loopgroup$ is a split normal subgroup of $\ballgroup$. The quotient map
\[
	q: \ballgroup \to \ballgroup/\loopgroup \isom \smooth{S^2,G}
\]
has $\loopgroup$ as its kernel.

Given any mapping group $\smooth{X,G}$ for a Lie group $G$ and a manifold $X$, there is an extension by the Abelian group of smooth maps $\smooth{\smooth{X,G},S^1}$~\cite{ps1986}, pp. 66 - 67. Let us then consider another  Abelian extension \cite{Mic87}    
\[
	\extseq{\potgroup}{\spheregroup}{\spheregroup} .
\]
Here the group $\spheregroup$ acts in the fiber through point-wise  right multiplication on functions, $f\mapsto fg$ for $g\in \spheregroup.$  The fiber contracts to $S^1$ (the constant maps taking values in $S^1$) and 
the topological nontriviality of the fibering comes from the nontriviality of circle bundle over $\spheregroup.$
In the gauge-theoretical formulation  \cite{Mic87} the fibre was actually the group $\smooth{\A, S^1}$ where $\A$ is  the space of vector potentials $A$ on  $S^3.$ However, using the map $f\mapsto f^{-1}df= A$ one can associate 
to $f\in \ballgroup$ a gauge potential  such that $A$ is an exact gauge on the boundary and the construction in \cite{Mic87} works as well in this setting.
 $\ballgroup$  is actually a good model for the gauge potentials on $S^3$ since modulo gauge
transformations $g$ such that $g(S^2) =  e$ the moduli space of these potentials with values in the Lie algebra $\lie{g}$ is equal to maps $S^2 \to G$  which is equal to the moduli space of gauge potentials (modulo based gauge transformations) on $S^3$  \cite{Sin81}. See for the motivation also in  \cite{mw2016}.

On the Lie algebra level we have the corresponding Mickelsson-Faddeev cocycle
\begin{equation}\label{eq:s3g-cocycle}
	\theta(A;x,y) =   k \int_{S^3} \tr A [\ext x, \ext y] .
\end{equation}
where $x,y$ are smooth maps from $S^3$ to the Lie algebra  $\lie{g}$ and
the trace is evaluated in a representation of $\mathfrak{g}$ and the normalization $k$ depends  on the representation. For example, in the defining represention of $SU(n)$ the basic cocycle
corresponds to $k = \frac{\sqrt{-1}}{12\pi^2}.$ The vector potential $A$ is a Lie algebra valued 1-form on $S^3$ and so the integrand is indeed  a 3-form on $S^3.$
No modification is needed to write this in terms of the $3$-loop group:
\[
	\extseq{\potgroup}{\loopgroup}{\loopgroup} .
\]
The abelian extension $\widehat{S^3G}$ gives by restriction of the base to the subgroup $\Omega^3 G \subset S^3G$ an abelian extension $\widehat{\Omega^3 G}$ with the same fiber.
The corresponding Lie algebra cocycle is exactly the same as above but the 1-form $A$ is   a function of $f\in \ballgroup$  using the map $f\mapsto f^{-1}df =A.$

\section{The  crossed module produced from the extensions of $\Omega^3G$ and $\ballgroup$}

\begin{definition}[Crossed module]
Let $G$ and $H$ be groups, and consider morphisms
\[ 
	\delta : H \to G \quad \textrm{and} \quad \alpha : G \to \aut{H} .
\]
We say that  these morphisms define   a crossed module if the following two diagrams commute.
\[
\begin{tikzcd}
H \times H \arrow[rd,"\ad"] \arrow[r,"\delta\times\id"] & G \times H \arrow[d,"\alpha"]\\
	& H
\end{tikzcd}
\qquad
\begin{tikzcd}
G \times H \arrow[d,"\id\times\delta"] \arrow[r,"\alpha"] & H \arrow[d,"\delta"] \\
G \times G \arrow[r,"\ad"] & G
\end{tikzcd}
\]
Equivalently, if we denote by $h^g$ the element-wise action of $G$ on $H$, the diagrams correspond to the equations
\[
	h^{\delta(h')} = \inv{h'} h h'
\]
and
\[
	\delta(h^g) = \inv{g} \delta(h) g 
\]	
for all $h,h' \in H$ and $g \in G$.
\end{definition}

The first equation is called the Peiffer identity. The second is a statement about the equivariantness of the $G$ action on $H.$ 
The definition goes back to the article in 1946 by Whitehead \cite{W}. And its specialization to the smooth setting is straight forward:

\begin{definition}[Smooth crossed module]
If the groups $G$ and $H$ in a crossed module  are Lie groups and the action defined by the morphism $\alpha$ is smooth, the crossed module is called a Lie crossed module, or a smooth crossed module.
\end{definition}

 Let us extend the group $\ballgroup$ as follows.  First we define a Lie algebra extension of $B^3_{\flat} \lie{g}$ by the abelian ideal  $\textrm{Map}(B^3_{\flat}G, i\mathbb{R})/i\mathbb{R}$  by using the point-wise action of
the former to the argument of the functions in the latter and twisting the semi-direct sum by the 2-cocycle in (1) but now allowing $x,y$ to be functions on $B^3_{\flat}.$ Taking the latter space of smooth functions
modulo the constants $i\mathbb{R}$ is necessary to avoid the constant term in the coboundary of (1)  (where $A= f^{-1} df$ for $f\in \potgroup)$,

\[
	\cob \theta = k \int_{\bndr B^3} \tr(x [\ext y, \ext z] - y [\ext z, \ext x] + z [\ext x, \ext y]) .
\]

The Lie algebra extension determines  uniquely  the structure of the contractible (and in particular simply connected)  covering group of the Lie algebra extension by Thm. VII.2  \cite{Neeb04} (cited in Thm. 2.7 in \cite{MN}) ,


$${\potgroup/S^1} \to  {\potgroup/S^1} \rtimes_{\theta} {\ballgroup}   \to {\ballgroup}$$
where all the groups are contractible and thus the fibering is topologically trivial. We skip the explicit  construction of the Lie group extension but it should be rather straight forward using  the method in
\cite{Mic87} for the construction of  the abelian extension of $S^3 G.$

Note that now the right adjoint action of $\gext{\ballgroup}$ defines automorphisms on the Lie algebra of the extension $\gext{\loopgroup}$ so that we have a homomorphism
\[
	\gext{\ballgroup} \to \aut{\gext{\Omega^3\lie{g}}} ,
\]
As shown in \cite{MN}:
If the (connected component of the identity of the) group $\gext{\loopgroup}$ is simply connected, then the homomorphism
\[
	\alpha' : \gext{\ballgroup} \to \aut{\gext{\Omega^3\lie{g}}}
\]
lifts to a homomorphism
\[
	\alpha : \gext{\ballgroup} \to \aut{\gext{\loopgroup}} 
\]
such that the action  on $\gext{\loopgroup}$ is smooth.

On the other hand, whether $\gext{\loopgroup}$ is simply connected or not, we can consider the group
\[
	\loopgroupmod = \gext{\loopgroup}/ S^1 = \potgroup/S^1 \rtimes \loopgroup
\]
which is a normal subgroup of $\gext{\ballgroup}$ since $\loopgroup$ is normal in
$\ballgroup$ and the fiber $\potgroup/S^1$ is mapped  onto itself in conjugation.  Hence there is a smooth right adjoint action of $\gext{\ballgroup}$ on $\loopgroupmod$ by conjugation. Furthermore, while the previously introduced $3$-loop group extension is not central, there is a central extension
\begin{equation}\label{eq:central_ext}
	\extseq{S^1}{\loopgroupmod}{\loopgroupmod} ,
\end{equation}
where the group $\gext{\ballgroup}$ acts trivially on the central $S^1$. Note that this is none other than the original extension $\gext{\loopgroup}$ of the $3$-loop group.

Furthermore,  we have a  lifting of the homomorphism \cite{MN}
\[
	\psi: \gext{\ballgroup} \to \aut{S^1} \times \aut{\potgroup/S^1 \rtimes \loopgroup}
\]
to a homomorphism
\[
	\gext{\psi} : \gext{\ballgroup} \to \aut{\gext{\loopgroup},\potgroup}
\]
.

The content of the Theorem below is just a repetition of   the last three equations in Section 3 in \cite{MN}, together with the definition of a crossed module.

 \begin{theorem} 
 There is a smooth homomorphism
\[
	\alpha: \gext{\ballgroup} \to \aut{\gext{\loopgroup}} ,
\]
 
which can be in fact extended to a crossed module using the homomorphism $\gext{\loopgroup} \to \gext{\ballgroup}$  which is the composite map

of the  epimorphism
\[
	\gext{\loopgroup} \to \potgroup/S^1 \rtimes \loopgroup ,
\]
to the normal subgroup $\potgroup/S^1 \rtimes \loopgroup $ of $\gext{\ballgroup}$  and the natural inclusion
\[
	\potgroup/S^1 \rtimes \loopgroup  \hookrightarrow \gext{\ballgroup} .
\]

\end{theorem}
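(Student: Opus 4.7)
The plan is to assemble the two group homomorphisms and then verify the two axioms of a crossed module stated in Definition 1. The first morphism, $\alpha$, has essentially been built in the paragraphs leading up to the theorem: the conjugation action of $\gext{\ballgroup}$ on its normal subgroup $\loopgroupmod = \potgroup/S^1 \rtimes \loopgroup$ lifts through the central extension $\extseq{S^1}{\loopgroupmod}{\loopgroupmod}$ (which is just $\gext{\loopgroup}$) to a smooth homomorphism $\alpha : \gext{\ballgroup} \to \aut{\gext{\loopgroup}}$, thanks to Neeb's lifting theorem cited in Theorem 2.7 of \cite{MN}, together with the fact that $\gext{\ballgroup}$ acts trivially on the central $S^1$. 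I would first confirm that this lift coincides with the homomorphism $\gext{\psi}$ already introduced above, so that no new construction is required.

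The boundary homomorphism $\delta : \gext{\loopgroup} \to \gext{\ballgroup}$ is defined exactly as the statement prescribes: it is the composition of the smooth epimorphism $\gext{\loopgroup} \to \loopgroupmod$ (with central kernel $S^1$) with the smooth inclusion of $\loopgroupmod$ as a normal subgroup of $\gext{\ballgroup}$. Smoothness and the homomorphism property of $\delta$ are immediate from these properties of its two factors.

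For the Peiffer identity $h^{\delta(h')} = \inv{h'} h h'$ with $h,h' \in \gext{\loopgroup}$, I would observe that $\delta(h')$ lies in the normal subgroup $\loopgroupmod \subset \gext{\ballgroup}$, and that the restriction of $\alpha$ to $\loopgroupmod$ is, by construction, the lift of the inner action of $\loopgroupmod$ on itself. Since $S^1$ is central in $\gext{\loopgroup}$, inner conjugation by $h'$ in $\gext{\loopgroup}$ descends to inner conjugation by $\delta(h')$ on $\loopgroupmod$, and by the very choice of lift $\alpha(\delta(h'))$ agrees with conjugation by $h'$ on $\gext{\loopgroup}$. The equivariance axiom $\delta(h^g) = \inv{g} \delta(h) g$ is then a formal consequence: $\alpha(g)$ covers the conjugation $\bar h \mapsto \inv{g} \bar h g$ on $\loopgroupmod$, and $\delta$ is assembled from the projection to $\loopgroupmod$ and the normal inclusion into $\gext{\ballgroup}$, both of which intertwine the relevant conjugation actions.

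The genuine obstacle is the Peiffer step, because two a priori lifts of a single inner automorphism on $\loopgroupmod$ through the central extension differ by a character $\gext{\loopgroup} \to S^1$. I would have to check that the particular lift $\alpha$ produced by the cocycle construction is the canonical one that coincides with inner conjugation in $\gext{\loopgroup}$ whenever its argument already lies in the normal subgroup $\loopgroupmod \subset \gext{\ballgroup}$. This compatibility is precisely where the structure theory of \cite{MN} does the real work: the cocycle on $B^3_{\flat}\lie{g}$ was set up so that its restriction to $\Omega^3\lie{g}$ reproduces the one defining $\gext{\loopgroup}$, which removes the character ambiguity and pins down the lift $\alpha$ required for the crossed module.
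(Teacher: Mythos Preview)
The paper does not supply a proof of this theorem at all: immediately before the statement it declares that ``the content of the Theorem below is just a repetition of the last three equations in Section 3 in \cite{MN}, together with the definition of a crossed module,'' and then moves on. So there is no argument in the paper to compare your proposal against; the theorem is treated as a quotation of an already-established result.

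Your sketch is consistent with, and more explicit than, the paper's own discussion preceding the theorem. You correctly locate $\alpha$ as the lift $\gext{\psi}$ already produced, you define $\delta$ exactly as the statement dictates, and you isolate the one point where nontrivial work is needed: the Peiffer identity only holds if the chosen lift $\alpha$ restricts to inner conjugation on the image of $\delta$, and two lifts of the same automorphism of $\loopgroupmod$ through the central $S^1$ can differ by a character. Your resolution---that the cocycle on $B^3_{\flat}\lie{g}$ is designed so that its restriction to $\Omega^3\lie{g}$ reproduces the cocycle defining $\gext{\loopgroup}$, which pins down the lift---is exactly the mechanism the paper's setup is pointing at, and is presumably what \cite{MN} makes precise. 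In short, your proposal is a faithful reconstruction of the argument the paper is citing rather than proving.
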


\section{CAR algebra representation}

In the case of the loop group $\Omega G$ there is a Fock space representaion of the crossed module $(PG, \widehat{\Omega G})$ where the group $PG$ of smooth paths in $G$ starting from the identity acts as automorphisms
of the standard central extension $\widehat{\Omega G}$ and the construction of the crossed module is completed by the composite map of the canonical projection $\widehat{\Omega G} \to \Omega G$ and the natural
embedding $\Omega G \to PG.$ The representation is defined through a representation of $\widehat{\Omega G}$  in a representaion of the canonical anticommutation relations algebra (CAR) in  a complex Hilbert space
$\mathcal{F},$  \cite{KLW1}.  The latter representation is fixed by a polarization $H= H_{+} \oplus H_-$ of a complex Hilbert space $H.$ To each vector $v\in H$ is associated a pair $a^*(v), a(v)$ in the CAR algebra, the first is linear
in $v$ whereas the second is antilinear, and they obey the anticommutation relations
$$  a^*(v) a(u) + a(u) a^*(v) = <u,v> \mathbf{1}$$
where $<\cdot, \cdot>$ is the inner product. All the 'creation operators' $a^*$ anticommute among themselves and the same with the 'annihilation operators' $a.$  The Hilbert space $\mathcal{F}$ is defined by the existence of
a vacuum vector $\psi$ such that $a^*(u) \psi = 0 = a(v) \psi$  for all $u\in H_-$ and $v\in H_+.$ In this representation $a^*(v)$ is the adjoint of $a(v).$ The norm of $\psi$ is set to be $1.$ 

In the case of the loop group representation $H$ is selected as the space of square integrable functions on the unit circle with values in the defining representation of the simple Lie group $G.$ The polarization is fixed
by setting $H_+$ equal to the subspace spanned by the nonnegative Fourier modes.

The automorphism group $PG$ of $\widehat{\Omega G}$ acts as automorphism of CAR algebra by $a^*(v)  \mapsto a^*(fv)$ where $f$ acts on $v$ by pointwise multiplication.  This defines also an inner automorphism
on the unitaries representing the loop group in the Fock space as follows:  Let $\hat{g} \in \widehat{\Omega G}$ and $g$ its projection in $\Omega G.$ Then  $\hat g a^*(v) \hat g^{-1} = a^*(gv).$
We have denoted an element of the loop group and its representative in the Fock space by the same symbol. Now we can compose the action of an inner automorphism $\hat g$ and an outer automorphism $Aut_f$
for $f\in PG$ on the CAR algebra as
$$Aut_f Aut_{\hat g} Aut_f^{-1} a^*(v)  =  a^*(f g f^{-1} v).$$
This is an inner automorphism by the element $h= fgf^{-1}$ of the loop group and therefore the action above can be written as
$$a^*(v)  \mapsto \hat{h} a^*(v) \hat{h}^{-1}$$
for an element $\hat h$ inthe central extension projecting onto $h.$ It follows from the irreducibility of the CAR representation  that the operator in the Fock space representing the inner automorphism $h$ is
uniquely defined up to a multiplicative phase and so indeed $Aut_f$ lifts to an automorphism of the group of unitary operators representing $ \widehat{\Omega G}$  in $\mathcal{F}.$  The phase can be fixed
by selecting $\hat h$ to be the representative of $Aut_f(\hat g) \in \widehat{\Omega G}$ in the Fock space. In this way the $PG$ action on $\widehat{\Omega G}$ is compatible with the outer automorphism action of
$PG$ on the CAR algebra and the representation of $\Omega G$ as inner automorphisms of the same CAR algebra.

Our next aim  is to extend the loop group construction above to  the case of the crossed module in the previous section. The first problem is that an element of $\Omega^3 G$  cannot be lifted to an operator in a Fock space.
This is due to the Shale-Stinespring theorem \cite{ShSt}  according to which the an operator  $g$ in $H$ is represented in $\mathcal F$ if and only is it satisfies the conditon that the off-diagonal blocks of $g$ in the polarization
$H= H_+ \oplus H_- $ are Hilbert-Schmidt. In our case the natural polarization is given by the spectral projections of a Dirac operator to negative and nonnegative eigenvalues. The Dirac operator at hand is acting
on spinor fields tensored with a representation of $G$ over $S^3.$ (In the case of the loop group the Dirac operator is simply $i \frac{d}{d \phi}$ on the unit circle.) 

 It is a simple calculation in pseudodifferential analysis involving the symbol of
the sign of the Dirac operator to see that the the condition is  satisfied if only if $g$ is a constant function on $S^3.$  In fact, in local cordinates $x_i$ and corresponding momenta $p_i$ the principal symbol of a Dirac operator is  $\sum \gamma^i  p_i$  and its sign $\epsilon = \sum \gamma^i p_i / |p| .$  Therefore the highest degree part of the commutator  $[\epsilon, g] = \sum \frac{\partial\epsilon}{\partial p_i} \frac{\partial g}{\partial x_i} + \text{terms  lower order in $p$} $ is of order
$-1$ in monenta in general, and zero  only if $g$ is a constant function.  But an operator symbol on a 3-manifold is Hilbert -Schmidt only if it is of order less than $- 3/2.$ 

However, the Hilbert-Schmidt condition can be circumvented if one works  in the completion of  the algebraic tensor product   $V\otimes H$,  where $V$ is the vector space of smooth complex valued functions on the infinite-dimensional
space $\ballgroup$ (with Fr{\'e}chet topology),   to the space consisting of all (Fr{\'e}chet) smooth functions on $\ballgroup$ with values in $H.$ The use of Fr{\'e}chet topology is reasonable in this context since the Lie algebra of the
gauge group is acting as differential operators  on $V.$ 

The method is based on the following fact, \cite{Mi94}, \cite{LaMi}. Given the family of Dirac operators $D_A$ parametrized by smooth vector potentials (Lie algebra valued 1-forms) on
$S^3$  one can construct a family of unitary operators $T_A$ acting in $H$ such  that
$$ \omega(g;A) = T^{-1}_{ A} g T_{A^g}$$
satisfies the Hilbert Schmidt condition. Here $ A^g = Aut_g(A) + g^{-1} dg$ is the gauge action on a vector potential $A.$  By construction $\omega$ is a 1-cocycle meaning that

\begin{equation} \label{eqn: 1cocycle}
\omega(g_1; A) \omega(g_2 ; A^{g_1}) = \omega(g_1 g_2; A).
\end{equation}

Actually, the same formula in \cite{Mi94} used to define $\omega(g; A)$ for $g\in \Omega^3 G$ can be extended to all $g\in \ballgroup$ but then the Hilbert-Schmidt condition is no more valid outside $\Omega^3G.$ 

In the present discussion we replace the space $\mathcal A$ of vector potentials by $\ballgroup$ using the embedding $\ballgroup \to \mathcal A$ by $f \mapsto f^{-1} df.$ Then the gauge transformation $A\mapsto A^g$ corresponds
to $f\mapsto fg$ for $g\in \Omega^3 G \subset \ballgroup.$ 

The action of $g\in \Omega^3 G$ on $V\otimes H$ is
$$v \otimes x \mapsto g\cdot v \otimes \omega(g; \cdot ) x$$
where $(g \cdot v)(f) = v(fg),$ pointiwise multiplication in the argument. By the cocycle property of $\omega$ we have then $(g_1 g_2) (v\otimes x) = g_1 (g_2 (v\otimes x)).$ 

Because $\omega$ satifies the Hilbert-Schmidt condition the action can be lifted to an  action on $V\otimes \mathcal F,$
\cite{ShSt} (again, we pass to the completion containing all smooth functions on $\ballgroup$ with values
in $\mathcal F.$) This action is given by the gauge action on the argument in $\ballgroup$ and a multiplication by an operator $\hat{\omega}(g;\cdot)$ in the Fock space. Here $\hat{\omega}$ is an element
of the central extension $\hat{U}_{res}$ of the restricted unitary group in $H$ consisting of unitaries with Hilbert-Schmidt operators in the off-diagonal blocks with respect to the given polarization.
The lift $\omega(g;A) \to \hat{\omega}(g;A)$ is again unique up the a phase. The important point now is that we do not get a central extension of $\Omega^3G$ since the phase depends on the  argument $A.$
Instead, we get an abelian extension by $Map(\ballgroup, S^1)$ with an action of $g\in \Omega^3 G$ by right multiplication $f\mapsto fg$ on the argument  $f\in \ballgroup.$ 

We  denote by the same symbol $\pi$ both the canonical projections $\widehat{\Omega^3 G} \to \Omega^3 G$ and $\widehat{\ballgroup} \to \ballgroup.$
Smooth functions on $\ballgroup$ with values in the CAR algebra are denoted by $\mathcal B.$ Each element $h\in\widehat{\ballgroup}$ defines an outer automorphism $\alpha_h$
on elements  $F\in \mathcal B$ by the action on the CAR algebra by $a^*(x) \mapsto a^*(\omega(\pi(h);A)x)$ and the gauge action on the argument $A= f^{-!}df$ of $F$
by $f\mapsto f\pi(h).$

\begin{definition}  A CAR representation of the crossed module $(\widehat{\ballgroup}, \widehat{\Omega^3 G})$ consists of 
1) representation of $\widehat{\Omega^3 G}$ as unitary operators in $\mathcal F$ as functions of $\ballgroup$  with gauge action $f\mapsto fg$ by $g\in\Omega^3 G$ on the
argument $f\in\ballgroup,$

2) outer automorphims of $f\in \widehat{\ballgroup}$ acting on elements of $\mathcal B$ composed of the right  action of $ \pi(f) \in \ballgroup$ on functions $F$ on $\ballgroup$ and of
outer automorphisms of the CAR algebra as $a^*(x) \mapsto a^*(\pi(f) x)$ where $\pi(f)$ is acting on $v\in H$ 

3)  compatibility condition: 
The combined conjugation by the automorphisms  $\alpha_h^{-1} \circ \beta \circ \alpha_h$ where $\alpha_h$ is the
outer automorphism by $h$ as defined above and $\beta_g$ is the (inner) automorphism on the CAR algebra through the  conjugation by $\hat{\omega}(g; \cdot)$ is the same as the action by the inner automorphism by $\gamma=\hat{\omega}(\pi(h)^{-1} g \pi(h);\cdot)$
and the right action on the argument on $F$ by $\pi(h)^{-1} g\pi(h).$ 
\end{definition}

{\bf Remark} The Peiffer identity of a crossed module is preserved in the CAR representation since the outer automorphisms on the CAR algebra by elements of $\widehat{\Omega G}$  are implementable 
(by the Hilbert-Schmidt condition) and thus are actually inner outomorphims.  The compatibility condition above corresponds to the second  equation in the definition of a crossed module.

 \begin{theorem}  We have a CAR representation of the crossed module $(\widehat{\ballgroup}, \widehat{\Omega^3 G})$ in the space $\mathcal B$  constructed from the family of inner automorphisms 
$\hat{\omega}(g; \cdot)$ and the outer automorphisms $\alpha_h$ for  $h\in \widehat{\ballgroup}.$ 

\end{theorem}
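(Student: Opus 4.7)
The plan is to verify the three items of Definition~3 in turn; items (1) and (2) follow directly from the constructions of Section~4, and the content of the theorem lies in item (3).

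For item (1), I would take the Shale--Stinespring lift $\hat\omega(g;A)\in\hat U_{\mathrm{res}}$ for $g\in\loopgroup$ and set $\rho(g)F(f)=\hat\omega(g;A(f))F(fg)$ for $F\in\mathcal B$. The $1$-cocycle identity \eqref{eqn: 1cocycle} turns $\rho$ into a homomorphism modulo an $S^1$-valued $2$-cocycle on $\loopgroup$ whose values are functions of $A$; this cocycle is exactly the one defining $\gext{\loopgroup}\to\loopgroup$ in Section~2, so $\rho$ lifts to a unitary representation of $\gext{\loopgroup}$ on the Hilbert-space completion of $\mathcal B$.

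For item (2), I would define $\alpha_h$ on $\mathcal B$ by combining the Bogoliubov action $a^*(x)\mapsto a^*(\pi(h)x)$ on the CAR factor with the right translation $f\mapsto f\pi(h)$ on the argument, and extend to the abelian fibre $\potgroup/S^1\subset\gext{\ballgroup}$ by pointwise multiplication on $F$ and a trivial action on the CAR algebra. The symbol calculation recalled in Section~4 shows that the off-diagonal block of $\pi(h)$ in the Dirac polarization fails the Hilbert--Schmidt condition whenever $\pi(h)$ is non-constant, so $\alpha_h$ is genuinely outer on CAR for $\pi(h)\notin\loopgroup$.

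The heart of the proof is item (3). I would compute $\alpha_h^{-1}\circ\beta_g\circ\alpha_h$ on a generator $F(f)=a^*(x)$ and track separately (a) the composition of the three right-translations, which collapses to a single translation by $\pi(h)^{-1}g\pi(h)$ matching the shift on the right-hand side, and (b) the CAR factor, where the resulting Bogoliubov action is reduced using the definition $\omega(g;A)=T_A^{-1}gT_{A^g}$ together with the $1$-cocycle identity \eqref{eqn: 1cocycle} applied to the factorisation $\pi(h)^{-1}\cdot g\cdot \pi(h)$. The outcome should be the Bogoliubov action by $\omega(\pi(h)^{-1}g\pi(h);A)$, which is precisely the operator implementing $\beta_{\pi(h)^{-1}g\pi(h)}$ and matches the right-hand side of the compatibility in Definition~3.

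The main obstacle will be the phase bookkeeping: the identity just sketched holds strictly in $U(H)$, but passing to the lifts $\hat\omega\in\hat U_{\mathrm{res}}$ introduces an $S^1$-valued defect depending on $A$. I would identify this defect with the $2$-cocycle through which $\gext{\ballgroup}$ acts on $\gext{\loopgroup}$ in the crossed module of Theorem~1, so that the element $\hat\omega(\pi(h)^{-1}g\pi(h);\cdot)$ on the right-hand side is literally the conjugate $h^{-1}\widehat{g}h$ in $\gext{\loopgroup}$ rather than just a projection of it. Once that is done, the equivariance part of the crossed module has been verified by the computation above; the Peiffer identity then follows automatically, since for $h$ in the image of $\gext{\loopgroup}\subset\gext{\ballgroup}$ the outer automorphism $\alpha_h$ becomes implementable in Fock space and coincides with the inner automorphism $\beta_h$.
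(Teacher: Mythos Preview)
Your overall strategy matches the paper's proof: items (1) and (2) are disposed of quickly, the content is the verification of (3) on the CAR generators $a^*(x)$ via the cocycle identity \eqref{eqn: 1cocycle}, and the phase ambiguity is handled by observing that the abelian fibre $\smooth{\ballgroup,S^1}$ acts trivially by conjugation on the CAR algebra. Your Peiffer remark is also the same as the paper's.

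There is, however, one point where your execution diverges from the paper and where, as written, the computation would not close up. In item (2) you let $\alpha_h$ act on CAR generators by the \emph{bare} Bogoliubov automorphism $a^*(x)\mapsto a^*(\pi(h)x)$, following the literal wording of Definition~3(2). The paper's actual computation (and the paragraph immediately \emph{before} Definition~3) instead takes $\alpha_h$ to act as $a^*(x)\mapsto a^*(\omega(\pi(h);A)x)$, with the same cocycle $\omega$ extended from $\loopgroup$ to all of $\ballgroup$. The distinction matters: with your bare $\pi(h)$ the composite on the CAR side becomes
\[
\pi(h)^{-1}\,\omega\bigl(g;A^{\pi(h)^{-1}}\bigr)\,\pi(h),
\]
and there is no reason for the operators $T_A$ hidden inside $\omega$ to cancel against the bare $\pi(h)$'s so as to produce $\omega(\pi(h)^{-1}g\pi(h);A)$; the $1$-cocycle identity \eqref{eqn: 1cocycle} relates products of $\omega$'s, not mixtures of $\omega$'s and bare group elements. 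With the twisted choice $\omega(\pi(h);A)$, the composite $\alpha_h^{-1}\circ\beta_g\circ\alpha_h$ acting on $a^*(x)$ becomes a product of three $\omega$'s evaluated at successively shifted potentials, and two applications of \eqref{eqn: 1cocycle} collapse it to $\omega(\pi(h)^{-1}g\pi(h);A)$ --- this is exactly the four-line display in the paper's proof. So you should redefine $\alpha_h$ using $\omega(\pi(h);\cdot)$ (equivalently, conjugate the whole picture by the family $T_A$ before running your argument); after that correction your plan and the paper's proof coincide.

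A smaller point: your proposed phase bookkeeping is more elaborate than what the paper does. The paper does not try to match the $S^1$-defect with the crossed-module $2$-cocycle; it simply notes that conjugation by anything in $\smooth{\ballgroup,S^1}$ is trivial on the CAR algebra, so the inner automorphism fixes the lift only up to that abelian factor, and $\gamma=\hat\omega(\pi(h)^{-1}g\pi(h);\cdot)$ is declared to be the natural representative.
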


\begin{proof} 1)  We have already explained the construction of the operators $\hat{\omega}(g; \cdot)$ in the Fock space acting as inner outomorphims of the CAR algebra, 2) is clear from the definitions of its
ingredients. What needs a proof is the compatibility condition 3).

 First we observe that the action on the base of the extension $\widehat{\Omega^3 G} \to \Omega^3 G$ and the gauge action on the fiber $Map(\ballgroup,S^1)$ uniquely define
the conjugation action by  elements of $\widehat{\ballgroup}.$  Note that the conjugation action of any element of the abelian group $Map(\ballgroup, S^1)$  on the CAR algebra is trivial so that when looking for a group
element  in $\widehat{\Omega^3 G}$ which provides the required inner automorphism acting on the CAR algebra there is always an ambiguity modulo a conjugation acion by $Map(\ballgroup, S^1)$. 
What we want to show is that there is a natural  choice $\gamma.$ 
It is sufficient to check the conjugation action on the generators $a^*(x)$ of the CAR algebra: the algebra consists in polynomials of these elements and their adjoints $a(x).$
Now from the defintions of the incredients in $\gamma$ we get at a point $A$

\begin{align}
 (\alpha_h^{-1} \circ \beta_g \circ \alpha_h) a^*(x) 
= &\null (\alpha^{-1}_h \circ \beta_g) a^*(\omega(\pi(h); A)x)\\
= &\null \alpha^{-1}_h a^*(\omega(g; A) \omega(\pi(h); A^g)x)  \\
= &\null a^*(\omega(\pi(h); A) \omega(g; A^{h^{-1}}) \omega( \pi(h); A^{h^{-1}g} )x )\\
= &\null a^* (\omega(\pi(h^{-1}) g \pi(h); A) x)
\end{align}

where the last equation follows from cocycle property  \eqref{eqn: 1cocycle}  of $\omega.$  But the last expression is just the action of $\gamma$ on $a^*(x).$

\end{proof}

Conflict of interest statement: On behalf of all authors, the corresponding author states that there is no conflict of interest.

Data availability statement: No additional data available.

{}

\end{document}